\begin{document}


\title[Comments on Slavnov products, Temperley-Lieb open spin chains, and KP tau functions]{Comments on Slavnov products, Temperley-Lieb open spin chains, and KP tau functions}

\author{Thiago Araujo}

\address{\noindent Albert Einstein Center for Fundamental Physics, Institute for Theoretical Physics, University of Bern, Sidlerstrasse 5, CH-3012, Bern, Switzerland}
\email{\texttt{\href{thgr.araujo@gmail.com}{thgr.araujo@gmail.com}}} 

\subjclass[2020]{37K10, 82B20, 82B23}

\begin{abstract}
We study Slavnov's inner products for open Temperley-Lieb chains and their relations with the KP hierarchy. We show that when \(s=1/2\) the quantum group invariant XXZ spin chain has Slavnov products given by the quotient of tau functions. The Schur polynomials expansion and a matrix construction for the Baker-Akhiezer functions are also briefly considered.

\bigskip

\noindent \textbf{Keywords:} Temperley-Lieb, XXZ, spin chain, KP-hierarchy
\end{abstract}

\dedicatory{To the memory of my father, David Araujo}

\maketitle

\setcounter{tocdepth}{1}
\tableofcontents

\section{Introduction}

Classical and quantum integrable systems share an extensive set of mathematical commonalities. The quantization of a classical integrable system is, at least in principle, the proper quantization of their defining integrable structures. Remarkably, a few different relations between classical and quantum integrable systems have been observed~\cite{Wu:1975mw, Its:1992bj, Korepin:1993kvr, Korepin2000, Alexandrov:2011aa, Foda2009a, Foda2009}, but the underlying mechanisms (if any) governing these correspondences are not known yet. 

The present work deals with the unexpected relations originally studied by Foda, Wheeler, and Zuparic in~\cite{Foda2009a, Foda2009}. The authors prove that for the XXZ spin-\(\sfrac{1}{2}\) chain with periodic boundary conditions, the Slavnov's inner product~\cite{Slavnov1989} is (modulo multiplicative terms) a tau-function of the \emph{Kadomtsev-Petviashvili} (KP) integrable hierarchy~\cite{Miwa2000}. More specifically, using the Jacobi-Trudi identities and the Schur polynomials expansions -- important components in their work -- the authors have shown that the inner product between on-shell and off-shell Bethe states can be represented in terms of a solution of an integrable hierarchy.

This duality might allow us to study quantum spin chain from another perspective, but one might ask first whether the results of~\cite{Foda2009a, Foda2009} are beautiful occurrences of the XXZ spin-\(\sfrac{1}{2}\) chain with periodic boundary conditions, or if there is something more interesting behind the curtains. In particular, we would like to know if this duality holds for other boundary conditions, representations \(s>\sfrac{1}{2}\), or other types of spin chains. These bold questions find two difficult obstacles. 

First, Slavnov products are complex objects to construct, see~\cite{Korepin:1993kvr}, and there is no general protocol teaching us how to build them; it is safe to say that we are essentially confined to the case-by-case analysis at the time of writing. Despite these difficulties, some new results have recently appeared in the literature~\cite{Nepomechie:2016ejv, Nepomechie:2016ruv} allowing us to advance this program a bit further. These new results are the examples we address in the current work. 

Furthermore, the tools used in~\cite{Foda2009a, Foda2009} are very specific to their particular case, and hardly generalizable to other contexts. More specifically, the exceptional structure of the Slavnov product allowed the authors to use some  Jacobi-Trudi-like identities in their study, but this feature is not expected in other cases, see the structure of~\cite{Nepomechie:2016ejv, Nepomechie:2016ruv}. In the current work, we also try to find some new ingredients and tools to study this duality between spin chains and classical integrable hierarchies.

In this work we argue that the Slavnov product for the spin-\(s\) \ac{tl} open chain~\cite{Nepomechie:2016ejv, Nepomechie:2016ruv} defines two KP tau functions. Section~\ref{sec:tlchain} starts with a short review on the \ac{tl} spin-\(s\) chain, and in sections~\ref{sec:slavnov} and~\ref{sec:schur}, we show how a pair of KP tau functions emerges from the Slavnov product, and we study some of the most immediate consequences. We conclude in section~\ref{sec:remarks} with further remarks, a comparison between our results and~\cite{Foda2009a, Foda2009}, and some open problems.


\section{Temperley-Lieb Spin Chains}
\label{sec:tlchain}

This section introduces the \ac{tl} open spin-\(s\) chain following Nepomechie and Pimenta~\cite{Nepomechie:2016ejv, Nepomechie:2016ruv}, see also~\cite{Temperley1971}. The \ac{tl} algebra \(\mathcal{A}_N(\varepsilon)\) is defined by the generators \(\{ U_{i} \ |\ i=1, \dots, N-1 \}\) and the relations
\be 
U_{i}^2= \varepsilon U_{i} \; ,\quad U_{i} U_{i\pm 1}U_{i} = U_{i} \; ,\quad U_{i} U_{j}=U_{i} U_{j} \ \ \   |i-j|>1\; ,
\ee
where we write \(\varepsilon = -\left(q+q^{-1}\right) \). The generators can be diagrammatically represented as in figure~\ref{fig:generators}, in such a way that the algebraic relations can be elegantly represented as braiding operations, see~\cite{Abramsky2009, Doikou:2009xq} for reviews. 
\begin{figure}[h!]
	\centering
	\begin{tikzpicture}[thick, scale=0.8]
		\node at (-2.5,0) {\(\mathbb{1}=\)};
		\draw[line width=0.5mm] (-2,-1) -- (-2,1); \node at (-2,-1) {\(\bullet\)}; \node at (-2,1) {\(\bullet\)}; \node at (-2,1.3) {\(1\)};
		\draw[line width=0.5mm] (-1,-1) -- (-1,1); \node at (-1,-1) {\(\bullet\)}; \node at (-1,1) {\(\bullet\)};  \node at (-1,1.3) {\(2\)}; 
		\node at (-0.25,0) {\(\cdots\)};
		\draw[line width=0.5mm] (0.5,-1) -- (0.5,1); \node at (0.5,-1) {\(\bullet\)}; \node at (0.5,1) {\(\bullet\)};  \node at (0.5,1.3) {\(N-1\)}; 
		\draw[line width=0.5mm] (1.5,-1) -- (1.5,1); \node at (1.5,-1) {\(\bullet\)}; \node at (1.5,1) {\(\bullet\)};  \node at (1.5,1.3) {\(N\)}; 
	\end{tikzpicture}\hspace{1.0cm}
	\begin{tikzpicture}[thick, scale=0.8]
		\node at (-2.8,0) {\(U_i=\)};
		\draw[line width=0.5mm] (-2,-1) -- (-2,1); \node at (-2,-1) {\(\bullet\)}; \node at (-2,1) {\(\bullet\)}; \node at (-2,1.3) {\(1\)};
		\draw[line width=0.5mm] (-1,-1) -- (-1,1); \node at (-1,-1) {\(\bullet\)}; \node at (-1,1) {\(\bullet\)};  \node at (-1,1.3) {\(2\)}; 
		\node at (-0.25,0) {\(\cdots\)};
		\draw[line width=0.5mm] (1.5,1) arc (0:-180:0.5); \node at (0.5,-1) {\(\bullet\)}; \node at (0.5,1) {\(\bullet\)};  \node at (0.5,1.3) {\(i\)}; 
		\draw[line width=0.5mm] (1.5,-1) arc (0:180:0.5); \node at (1.5,-1) {\(\bullet\)}; \node at (1.5,1) {\(\bullet\)};  \node at (1.5,1.3) {\(i+1\)}; 
		\node at (2.25,0) {\(\cdots\)};
		\draw[line width=0.5mm] (3,-1) -- (3,1); \node at (3,-1) {\(\bullet\)}; \node at (3,1) {\(\bullet\)};  \node at (3,1.3) {\(N-1\)}; 
		\draw[line width=0.5mm] (4,-1) -- (4,1); \node at (4,-1) {\(\bullet\)}; \node at (4,1) {\(\bullet\)};  \node at (4,1.3) {\(N\)}; 
	\end{tikzpicture}
	\caption{Identity and operators \(U_i\), \(i=1,\dots, N-1\), in \(\mathcal{A}_N\).}
	\label{fig:generators}
\end{figure}
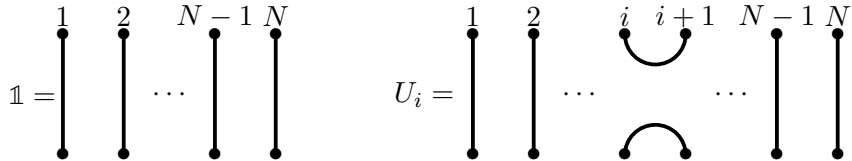

Despite its simplicity, this structure appears in a wide range of settings~\cite{Abramsky2009}; and we are particularly interested in spin chains defined by these generators.

\subsection{Temperley-Lieb spin chains}

The length \(N\) open \ac{tl} spin chain~\cite{Nepomechie:2016ejv, Nepomechie:2016ruv} with free boundary conditions is defined by the  Hamiltonian
\be 
\label{eq:tlhamil}
\mathcal{H}=\sum_{i=1}^{N-1} U_{i}\; .
\ee
This spin chain is quantum integrable for any spin-\(s\) representation of \(U_q(\mathfrak{sl}(2))\), and the case \(s=1/2\) reduces to the \(U_q(\mathfrak{sl}(2))\)-invariant spin chain~\cite{Pasquier:1989kd}. The R-matrix is
\be 
R(v) = w(vq) \mathcal{P} + w(v) \mathcal{P}U\; ,
\ee
where \(\mathcal{P}\) is the permutation operator in \(\mathbb{C}^{2s + 1}\otimes \mathbb{C}^{2s + 1}\), \(U\) is related to the algebra generators via \(U_i = \mathbb{1}^{\otimes (i-1)}\otimes U\otimes   \mathbb{1}^{\otimes (N-i-1)}\), and we have also used the notation
\be 
w(v) = v - v^{-1}\; .
\ee

The transfer matrix is written as
\be 
t(v):=\Tr_0 \left( K_0^+(v) T_0(v) K_0^-(v) \hat{T}_0(v)  \right)
\ee
where \(T_0(v)\) and \( \hat{T}_0(v)\) are the monodromy matrices, and \(K^\pm(v)\) satisfy the boundary Yang-Baxter equations. Explicitly, the matrices \(K^\pm(v)\) are
\be 
K^- = \mathbb{1}\qquad 
K^+ = \textrm{diag}\left( Q^{-2s}, Q^{-2(s-1)}, \dots, Q^{2s} \right)\; , 
\ee
with \(\sum_{k=-s}^s Q^{2k} \equiv -( q+q^{-1})\)\; .

The eigenvalues \(\Lambda(v)\equiv \Lambda(v, \bm{u})\) of \(t(v)\) are 
\be 
\label{eq:eigenvalues}
\Lambda(v) = -\frac{1}{w(v^2 q)} \left( w(v^2 q^2) w(v q)^{2N} \prod_{j=1}^M \frac{w(vq^{-1}/u_j) w(v u_j) }{ w(v/u_j) w(v q u_j)} + w(v^2) w(v)^{2N}   \prod_{j=1}^M \frac{w(vq/u_j) w(vq^2 u_j) }{ w(v/u_j) w(v q u_j)} \right)\; ,
\ee
where \(\bm{u}\equiv \{u_k \ | \ k=1, \dots, M \}\) are the Bethe roots in the \(M\)-magnon sector. 
\medskip

Consider the Temperley-Lieb spin chain~(\ref{eq:tlhamil}) with \(s=\sfrac{1}{2}\), and the states
\be 
\ket{\Psi_{bethe}(\bm{u})} =\prod_{k=1}^{M}\mathcal{B}(u_k)|0\rangle \qquad
\ket{\Psi(\bm{v})} =\prod_{k=1}^{M}\mathcal{B}(v_k)|0\rangle 
\ee
where \(\bm{u}\) satisfy the Bethe equations, and \(\bm{v}\) are free parameters. It has been proved~\cite{Kitanine:2007bi, Wang:2002paz}, and \cite{Nepomechie:2016ejv, Nepomechie:2016ruv}, that the Slavnov product~\cite{Slavnov1989} is given by
\be 
\label{eq:slavnov} 
\bracket{\Psi_{bethe}(\bm{u})}{\Psi(\bm{v})} =
2^{-M}Q^{-2Ms} \prod_{j=1}^M \frac{w(u_j)^{2N} u_j w(u_j^2)}{w(u_j^2) w(v_j^2 q^2)} \prod_{\substack{i, j=1 \\ i>j}}^M \frac{w(u_iu_j q^2)}{w(u_i u_j)}
\frac{\det\limits_{i,j} \partial_{u_i} \Lambda(v_j, \bm{u})}{\det\limits_{i,j} \left( \frac{1}{w(v_i u_j^{-1}) w(v_i u_j q)} \right)}
\; .
\ee
In~\cite{Nepomechie:2016ejv, Nepomechie:2016ruv}, a conjecture concerning this inner product establishes that formula~(\ref{eq:slavnov}) is also valid for any spin-\(s\) representation of \(U_q(\mathfrak{sl}(2))\). Although we will assume that this conjecture is true, it is important to make the distinction between the \(s=\sfrac{1}{2}\) and \(s>\sfrac{1}{2}\) cases.

The results just described are the ingredients we need to advance with this work, and we refer to~\cite{Nepomechie:2016ejv, Nepomechie:2016ruv} for further details.


\section{Slavnov product and KP tau functions}
\label{sec:slavnov}

In this section we show that the Slavnov product~(\ref{eq:slavnov}) can be written in terms of a quotient of KP tau functions. We also study some immediate consequences of this result, including a matrix integral representation for a normalized inner product.

\subsection{KP tau functions and matrix integrals}

Let us start with the main result of this work. 

\begin{theorem}
The Slavnov product~(\ref{eq:slavnov}) for the Temperley-Lieb spin-\(\sfrac{1}{2}\) chain is written as
\be 
\label{eq:sptheorem}
\bracket{\Psi_{bethe}(\bm{u})}{\Psi(\bm{v})} = G(\bm{v}, \bm{u})  \frac{\tau_{\bm{u}}^{(1)} (\bm{v})}{\tau_{\bm{u}}^{(2)} (\bm{v})}
\ee
where \(\tau_{\bm{u}}^{(1)}\) and \(\tau_{\bm{u}}^{(1)}\) are KP hierarchy tau functions. Assuming that the conjecture~\cite{Nepomechie:2016ejv, Nepomechie:2016ruv} is true, the theorem is immediately extended to all \(s>\sfrac{1}{2}\).
\end{theorem}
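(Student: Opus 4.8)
The plan is to show that each of the two determinants in~\eqref{eq:slavnov} is, up to a $\bm{v}$-dependent gauge factor, a KP tau function written in Miwa variables, and that these gauge factors, together with the explicit scalar prefactor of~\eqref{eq:slavnov}, assemble into $G(\bm{v},\bm{u})$. The characterization I would invoke is the Sato--Grassmannian one: a determinant of the form $\det_{a,b}\bigl(\phi_a(v_b)\bigr)$, built from $M$ functions $\phi_a$ of a single spectral variable evaluated at the points $v_1,\dots,v_M$, becomes a KP tau function once divided by the Vandermonde $\Delta(\bm{v})=\prod_{a<b}(x_a-x_b)$ of a suitably chosen node variable $x_b=x(v_b)$. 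Concretely, expanding each $\phi_a$ as a series in $x$ and applying the Cauchy--Binet (Andr\'eief) expansion produces
\[
\det_{a,b}\bigl(\phi_a(v_b)\bigr)=\Delta(\bm{v})\sum_{\lambda} \pi_\lambda\, s_\lambda(\bm{t}),
\]
where $s_\lambda$ are Schur polynomials in the power-sum Miwa times $t_k=\tfrac{1}{k}\sum_b x_b^{\,k}$ and the coefficients $\pi_\lambda$ are the maximal minors of the single coefficient matrix of the $\phi_a$. Being minors of one matrix, the $\pi_\lambda$ automatically satisfy the Pl\"ucker relations, so $\sum_\lambda \pi_\lambda s_\lambda(\bm{t})$ is a genuine KP tau function.

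First I would treat the denominator of~\eqref{eq:slavnov}. Its entries $\tfrac{1}{w(v_i u_j^{-1})\,w(v_i u_j q)}$ form a Cauchy-type matrix in which row $i$ carries the spectral parameter $v_i$ and column $j$ the Bethe root $u_j$; after a per-row rescaling $g(v_i)$ that clears the $v_i$-dependent poles, the residual determinant has the required form $\det_{i,j}\bigl(\phi^{(2)}_j(v_i)\bigr)$. Dividing by the Vandermonde of the node variable then exhibits it as $\prod_i g(v_i)\,\Delta(\bm{v})\,\tau^{(2)}_{\bm u}(\bm v)$, with the gauge $\prod_i g(v_i)$ bound for $G$. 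Next I would turn to the numerator $\det_{i,j}\partial_{u_i}\Lambda(v_j,\bm u)$, using the explicit rational form~\eqref{eq:eigenvalues}: differentiating $\Lambda(v,\bm u)$ in $u_i$ yields, for each $i$, a rational function of $v$ whose partial-fraction/series expansion in the chosen node variable supplies the functions $\phi^{(1)}_i(v)$, so the same argument gives $\det_{i,j}\partial_{u_i}\Lambda(v_j,\bm u)=(\text{gauge})\cdot\Delta(\bm{v})\,\tau^{(1)}_{\bm u}(\bm v)$.

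The main obstacle will be the numerator. Two points demand care. The denominator's natural node variable is fixed by the Cauchy structure, whereas the numerator carries both $w(v^2\cdots)$ and $w(v\cdots)$ factors in~\eqref{eq:eigenvalues}, so its natural node need not coincide; if the two node variables differ, the Vandermonde ratio $\Delta_x(\bm v)/\Delta_y(\bm v)$ survives as an additional $\bm v$-dependent factor that I must track into $G$. Moreover, unlike the case of~\cite{Foda2009a, Foda2009}, where a Jacobi--Trudi identity makes the Schur structure manifest, here I must verify directly that the series expansion of $\partial_{u_i}\Lambda$ organizes into a well-defined point of the Grassmannian, i.e.\ that the minors $\pi_\lambda$ of its coefficient matrix do satisfy the Pl\"ucker relations. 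Establishing this cleanly is the crux of the argument.

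Combining the two results, the ratio of determinants equals $\tau^{(1)}_{\bm u}(\bm v)/\tau^{(2)}_{\bm u}(\bm v)$ up to the ratio of the gauge and Vandermonde factors; collecting this with the explicit multiplicative prefactor of~\eqref{eq:slavnov} defines $G(\bm v,\bm u)$ and yields~\eqref{eq:sptheorem}. Finally, since the $s$-dependence of~\eqref{eq:slavnov} enters only through the scalar prefactor $2^{-M}Q^{-2Ms}$ and the definition of $Q$, and not through the internal structure of either determinant, the decomposition into $\tau^{(1)}/\tau^{(2)}$ is insensitive to $s$. Granting the conjecture of~\cite{Nepomechie:2016ejv, Nepomechie:2016ruv} that~\eqref{eq:slavnov} holds verbatim for all spin-$s$ representations, the $s>\sfrac{1}{2}$ case then follows immediately, with the entire $s$-dependence residing in $G(\bm v,\bm u)$.
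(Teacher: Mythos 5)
Your proposal is correct and follows essentially the same route as the paper: peel the explicit scalar prefactor off into \(G(\bm{v},\bm{u})\), recognize the numerator and denominator of~(\ref{eq:slavnov}) as \(\det_{i,j}F^{(1)}_i(v_j)\) and \(\det_{i,j}F^{(2)}_i(v_j)\) with \(F^{(1)}_i(z)=\partial_{u_i}\Lambda(z,\bm{u})\) and \(F^{(2)}_i(z)=1/(w(zu_i^{-1})w(zu_iq))\), divide both by the \emph{same} Vandermonde \(\Delta(\bm{v})\) so that each ratio~(\ref{eq:taufunc}) is a KP tau function in Miwa variables --- a fact the paper simply cites from~\cite{Kharchev:1991cy, Kharchev:1991cu, Morozov:1994hh, Marshakov:2006rh}, while you re-derive it via the Cauchy--Binet/Schur expansion. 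The only caveat is that your two anticipated obstacles are non-issues: the node-variable mismatch never arises because both determinants are evaluated at the same points \(v_j\) and divided by the identical \(\Delta(\bm{v})\), so no residual Vandermonde ratio or gauge factor needs tracking into \(G\) (the even-power bookkeeping and the \(z\to\sqrt{w}\) substitution matter only for the explicit Schur expansion of section~\ref{sec:schur}, not for the theorem), and your self-declared ``crux'' dissolves by your own earlier observation, since the coefficients \(\pi_\lambda\), being maximal minors of a single semi-infinite coefficient matrix, automatically satisfy the Pl\"ucker relations, with the only genuine requirement being that each \(F^{(a)}_i\) admits a Laurent expansion bounded below, which holds here.
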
	

\begin{proof}
As we mentioned before, Nepomechie and Pimenta~\cite{Nepomechie:2016ejv, Nepomechie:2016ruv} have conjectured that the Slavnov product is given by expression~(\ref{eq:slavnov}). Define the functions
\bse 
\be 
G(\bm{u}, \bm{v}) = 2^{-M}Q^{-2Ms} \prod_{j=1}^M \frac{w(u_j)^{2N} u_j w(u_j^2)}{w(u_j^2)} \prod_{\substack{i, j=1 \\ i>j}}^M \frac{w(u_iu_j q^2)}{w(u_i u_j) w(v_j^2 q^2)}\; ,
\ee
and 
\be 
\mathcal{K}^{-1}_{\bm{u}}(\bm{v}) = \frac{\det\limits_{i,j} \partial_{u_i} \Lambda(v_j, \bm{u})}{\det\limits_{i,j} \left( \frac{1}{w(v_i u_j^{-1}) w(v_i u_j q)} \right)}\; .
\ee
\ese
Observe that all interesting dependence on the free parameters \(\bm{v}\) is contained in \(\mathcal{K}_{\bm{u}}(\bm{v})\). With these definitions, one has
\be 
\bracket{\Psi_{bethe}(\bm{u})}{\Psi(\bm{v})} = G(\bm{v}, \bm{u}) \mathcal{K}^{-1}_{\bm{u}}(\bm{v})\; .
\ee
We may say that \(\mathcal{K}^{-1}_{\bm{u}}(\bm{v})\) is the \emph{kernel} of the Slavnov product.

Let us now write the kernel as
\be 
\mathcal{K}^{-1}_{\bm{u}}(\bm{v}) = \frac{\det\limits_{i,j} F^{(1)}_i(v_j)}{ \det\limits_{i,j} F^{(2)}_i(v_j) }\; ,
\ee	
where the set of functions \(\bm{F}^{(a)}(z)\), for \(a=1,2\), are defined by their components
\be 
\begin{split}
\bm{F}^{(1)}(z) & =\{F^{(1)}_1(z), \dots, F^{(1)}_M(z)\}\; : \qquad F^{(1)}_i(z) := \partial_{u_i} \Lambda(z, \bm{u})\\
\bm{F}^{(2)}(z) & =\{F^{(2)}_1(z), \dots, F^{(2)}_M(z)\}\; : \qquad F^{(2)}_i(z) := \frac{1}{w(z u_i^{-1}) w(z u_i q)} 
\end{split}	\; .
\ee

Using the Vandermonde determinant 
\[
\Delta(\bm{v})\equiv \det\limits_{i,j}(v_i^{M-j}) = \prod_{i<j} (x_i-x_j)\; ,
\] 
it is well known that the functions 
\be 
\label{eq:taufunc}
\tau^{(a)}(\bm{v}):= \frac{\det\limits_{i,j} F^{(a)}_i(v_j) }{\Delta(\bm{v})}\; ,\quad a=1,2\; ,
\ee
satisfy the~\ac{hbe} of the KP hierarchy~\cite{Kharchev:1991cy, Kharchev:1991cu, Morozov:1994hh, Marshakov:2006rh}
\be 
\label{eq:hbe}
\oint \frac{\dd z}{2\pi i} e^{\xi(\bm{t}-\bm{s},z)} \tau^{(a)}\left( \bm{t} -[z^{-1}] \right) \tau^{(a)}\left( \bm{s} + [z^{-1}]\right) = 0\; ,
\ee	
where \([z]=(z,z^2/2, z^3/3, \dots)\).

Consequently, the kernel
\be 
\label{eq:cond}
\boxed{
\mathcal{K}^{-1}_{\bm{u}}(\bm{v}) = \frac{\tau_{\bm{u}}^{(1)} (\bm{v})}{\tau_{\bm{u}}^{(2)} (\bm{v})} }
\ee
is the quotient of two tau functions.
\end{proof}

Assuming the risk of being too repetitive, we need to stress that if \(s>\sfrac{1}{2}\), the result above depends on a conjecture~\cite{Nepomechie:2016ejv, Nepomechie:2016ruv}.
\medskip

\begin{definition}
Furthermore, observe that \(G(\bm{u}, \bm{v}) \) does not have any interesting structure in \(\bm{v}\), since it is just the product of \(w(v_k)=v_k-v_k^{-1}\), therefore, one may define the normalized Slavnov product as
\be 
\bbracket{\Psi_{bethe}(\bm{u})}{\Psi(\bm{v})}\equiv \mathcal{K}^{-1}_{\bm{u}}(\bm{v})  = \frac{1}{G(\bm{u}, \bm{v})}\bracket{\Psi_{bethe}(\bm{u})}{\Psi(\bm{v})}\; .
\ee
\end{definition}
\medskip

Since the KP hierarchy is a particular case of the more general Toda lattice hierarchy, see~\cite{Zabrodin2018}, we can use those results in our case. From~\cite{Zinn-Justin:2002fzw, Zinn-Justin:2002rai, Zinnjustin2009}, we have the following lemma:
\begin{lemma}[Zinn-Justin-Zuber~\cite{Zinn-Justin:2002fzw, Zinn-Justin:2002rai, Zinnjustin2009}]
The functions~(\ref{eq:taufunc}) have integral representations
\be 
\label{eq:KPintrep}
\tau^{(a)}(\bm{t}) = (-)^{M(M-1)/2}  \det\limits_{i,j} \oint \frac{\dd z}{2\pi i} z^{-i} F^{(a)}_j (z) e^{\xi(\bm{t}, z^{-1})}\; ,
\ee
where \(t_m = \frac{1}{m}\sum_{i=1}^M v_i^m\) are the Miwa coordinates and
\be 
 \xi(\bm{t}, z^{-1}) = - \sum_{i=1}^M \ln (1-z^{-1} v_i) = \sum_{m=1}^\infty t_m z^{-m}\; .
\ee 
\end{lemma}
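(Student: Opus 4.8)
The plan is to evaluate the contour integrals in~(\ref{eq:KPintrep}) directly by residues and to recognise the resulting \(M\times M\) determinant as that of a product of three matrices, one of which is the Vandermonde matrix. The starting point is the Miwa parametrisation \(t_m=\frac{1}{m}\sum_{i=1}^{M}v_i^{m}\), which yields the generating-function identity
\[
e^{\xi(\bm{t},z^{-1})}=\exp\!\Big(\sum_{m\ge 1}t_m z^{-m}\Big)=\prod_{i=1}^{M}\frac{1}{1-z^{-1}v_i}=\prod_{i=1}^{M}\frac{z}{z-v_i}\;.
\]
Hence \(z^{-i}F^{(a)}_j(z)\,e^{\xi}=z^{M-i}F^{(a)}_j(z)\prod_{l=1}^{M}(z-v_l)^{-1}\), where for \(1\le i\le M\) the prefactor \(z^{M-i}\) removes the pole of \(z^{-i}\) at the origin, so that the poles arising from the elementary factors sit only at \(z=v_1,\dots,v_M\).

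First I would fix the integration cycle to encircle exactly these Miwa points \(\{v_1,\dots,v_M\}\), leaving the origin and the intrinsic singularities of \(F^{(a)}_j\) outside. Summing the residues at \(z=v_k\) then gives the matrix element
\[
M_{ij}:=\oint\frac{\dd z}{2\pi i}\,z^{-i}F^{(a)}_j(z)\,e^{\xi(\bm{t},z^{-1})}=\sum_{k=1}^{M}v_k^{M-i}\,F^{(a)}_j(v_k)\prod_{l\ne k}\frac{1}{v_k-v_l}\;.
\]
The key observation is that this is a matrix product \(M=A\,D\,B\), with \(A_{ik}=v_k^{M-i}\), \(D=\mathrm{diag}(D_1,\dots,D_M)\) where \(D_k=\prod_{l\ne k}(v_k-v_l)^{-1}\), and \(B_{kj}=F^{(a)}_j(v_k)\).

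The remaining step is to factorise the determinant. By multiplicativity \(\det M=\det A\,\big(\prod_k D_k\big)\,\det B\). Here \(A\) is the transpose of the Vandermonde matrix appearing in \(\Delta(\bm{v})\), so \(\det A=\Delta(\bm{v})\); the matrix \(B\) is the transpose of the \(F\)-matrix, so \(\det B=\det_{i,j}F^{(a)}_i(v_j)\); and the diagonal factor collapses, since each unordered pair \(\{k,l\}\) contributes \((v_k-v_l)(v_l-v_k)\), to \(\prod_k D_k=(-1)^{M(M-1)/2}\,\Delta(\bm{v})^{-2}\). Collecting the three pieces gives \(\det M=(-1)^{M(M-1)/2}\,\det_{i,j}F^{(a)}_i(v_j)/\Delta(\bm{v})\), and multiplying by the explicit prefactor \((-1)^{M(M-1)/2}\) in~(\ref{eq:KPintrep}) reproduces the definition~(\ref{eq:taufunc}) exactly.

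Once the residues are taken the algebra is routine, so the part that needs genuine care is the choice of contour: one must argue that a cycle enclosing the Miwa points \(v_k\) while avoiding the origin and the poles of \(F^{(a)}_j\) (the zeros of the \(w\)-factors for \(a=2\), and the poles of \(\partial_{u_i}\Lambda\) for \(a=1\)) is the correct and admissible one. The secondary subtlety is the sign bookkeeping: the explicit prefactor \((-1)^{M(M-1)/2}\) and the sign \((-1)^{M(M-1)/2}\) produced by the antisymmetric double product \(\prod_k\prod_{l\ne k}(v_k-v_l)\) must be tracked so that they cancel and leave \(\tau^{(a)}(\bm{v})\) with no spurious sign. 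Alternatively, since~(\ref{eq:taufunc}) is a Wronskian-type KP tau function, one may simply invoke the Miwa-coordinate representation of Zinn-Justin and Zuber.
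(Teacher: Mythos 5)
Your proposal is correct and follows essentially the same route as the paper: evaluate the contour integral by residues at the Miwa points to get $\sum_k v_k^{M-i}F^{(a)}_j(v_k)\prod_{l\neq k}(v_k-v_l)^{-1}$, factor the determinant through the Vandermonde matrix, and cancel the two signs $(-1)^{M(M-1)/2}$ via $\prod_{k\neq l}(v_k-v_l)=(-1)^{M(M-1)/2}\Delta(\bm{v})^2$. Your only additions are cosmetic (the explicit $A\,D\,B$ three-factor decomposition where the paper folds the diagonal into one matrix) plus a welcome remark on the admissible contour, which the paper leaves implicit in the phrase ``integrating over the poles.''
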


\begin{proof}
The proof is straightforward from direct calculations. Write~(\ref{eq:KPintrep}) as
\bse
\be 
\tau^{(a)}(\bm{v}) = (-)^{M(M-1)/2}  \det\limits_{i,j} \oint \frac{\dd z}{2\pi i} z^{-i}  F^{(a)}_j (z)  \prod_{k=1}^M\frac{z}{z - v_k} \; ,
\ee
integrating over the poles, we obtain
\be 
\begin{split}
\tau^{(a)}(\bm{v})  & = (-)^{M(M-1)/2} \det\limits_{i,j} \sum_{k=1}^M \frac{v_k^{M - i} }{ \prod\limits_{m\neq k} (v_k - v_m) } F^{(a)}_j(v_k)\\
& = (-)^{M(M-1)/2} \det\limits_{k,i} \left( \frac{v_k^{M-i} }{ \prod\limits_{m\neq v} (v_k - v_m) }\right) \det\limits_{k,j} F^{(a)}_j(v_k)\\ 
& = (-)^{M(M-1)/2} \frac{\Delta(\bm{v}) }{ \prod\limits_{ \substack{ k,m=1\\ m\neq k}}^M (v_k - v_m) }\ \det\limits_{i,j} F^{(a)}_j(v_i)
\end{split}
\ee
where in the first line, the summation denotes a matrices product. Moreover, using 
\[
\prod\limits_{ \substack{ k,m=1\\ m\neq k}}^M (v_k - v_m) = (-)^{M(M-1)/2}\Delta(\bm{v})^2\; \, 
\] 
we obtain
\[
\tau^{(a)}(\bm{v}) =
\frac{ \det\limits_{i,j} F^{(a)}_j(v_i) }{ \Delta(\bm{v})}\; ,
\]
that proves the equivalence of the two expressions. 
\ese
\end{proof}	
\medskip

\begin{corollary} The Slavnov product \(\bbracket{\Psi_{bethe}(\bm{u})}{\Psi(\bm{v})}\), as a function of the Miwa coordinates, can be represented as the normalized (matrix) expectation value
\bse
\be 
\bbracket{\Psi_{bethe}(\bm{u})}{\Psi(\bm{t})} = 
\frac{1}{Z(\bm{t})} \oint \prod_{j=1}^M \frac{\dd z_j}{2\pi i} e^{-M V(\bm{t},\bm{z})} \Delta^2(\bm{z}) \mathcal{K}^{-1}_{\bm{u}}(\bm{z}) \; ,
\ee
where the potential is given by
\be 
V(\bm{t},\bm{z}) = - \frac{1}{M} \sum_{j=1}^M \left( \xi(\bm{t}, z_j^{-1}) -M \ln z_j \right)  - \frac{1}{M} \ln \tau^{(2)}(\bm{z})\; .
\ee 
\ese
\end{corollary}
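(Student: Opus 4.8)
The plan is to evaluate the right-hand side directly and to show that it collapses to \(\mathcal{K}^{-1}_{\bm{u}}(\bm{t})\), which by the preceding definition is exactly \(\bbracket{\Psi_{bethe}(\bm{u})}{\Psi(\bm{t})}\). First I would exponentiate the potential. Since
\[
-M V(\bm{t},\bm{z}) = \sum_{j=1}^M \left( \xi(\bm{t}, z_j^{-1}) - M \ln z_j \right) + \ln \tau^{(2)}(\bm{z})\; ,
\]
one has \( e^{-M V(\bm{t},\bm{z})} = \tau^{(2)}(\bm{z}) \prod_{j=1}^M z_j^{-M} e^{\xi(\bm{t},z_j^{-1})} \). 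Multiplying by the observable \(\mathcal{K}^{-1}_{\bm{u}}(\bm{z}) = \tau^{(1)}(\bm{z})/\tau^{(2)}(\bm{z})\), the factor \(\tau^{(2)}(\bm{z})\) cancels — this cancellation is precisely the reason the collective term \(-\tfrac{1}{M}\ln\tau^{(2)}\) was built into \(V\) — so the integrand reduces to \( \prod_{j} z_j^{-M} e^{\xi(\bm{t},z_j^{-1})}\, \Delta^2(\bm{z})\, \tau^{(1)}(\bm{z}) \).

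Next I would use the determinantal form~(\ref{eq:taufunc}), \(\tau^{(1)}(\bm{z}) = \det_{i,j} F^{(1)}_i(z_j)/\Delta(\bm{z})\), so that \( \Delta^2(\bm{z})\tau^{(1)}(\bm{z}) = \Delta(\bm{z})\,\det_{i,j} F^{(1)}_i(z_j) \). Writing the Vandermonde as \(\Delta(\bm{z}) = \det_{i,j}(z_j^{M-i})\), the prefactor \(\prod_j z_j^{-M}\) is absorbed column by column to produce \(\det_{i,j}(z_j^{-i})\), while the scalars \(e^{\xi(\bm{t},z_j^{-1})}\) are absorbed into the columns of the remaining determinant. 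The integrand thus becomes the product of two \(M\times M\) determinants, \( \det_{i,j}(z_j^{-i})\, \det_{i,j}\bigl( F^{(1)}_i(z_j)\, e^{\xi(\bm{t},z_j^{-1})}\bigr) \).

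The decisive step is then the Andr\'eief (Heine--Gram) identity, which turns the eigenvalue integral of a product of two determinants into a single Gram determinant,
\[
\oint \prod_{j=1}^M \frac{\dd z_j}{2\pi i}\, \det_{i,j}\bigl(\phi_i(z_j)\bigr)\det_{i,j}\bigl(\psi_i(z_j)\bigr) = M!\, \det_{i,j} \oint \frac{\dd z}{2\pi i}\, \phi_i(z)\psi_j(z)\; .
\]
With \(\phi_i(z)=z^{-i}\) and \(\psi_j(z)=F^{(1)}_j(z) e^{\xi(\bm{t},z^{-1})}\), the numerator becomes \(M!\,\det_{i,j}\oint \tfrac{\dd z}{2\pi i}\, z^{-i} F^{(1)}_j(z) e^{\xi(\bm{t},z^{-1})}\), which by the Zinn-Justin--Zuber lemma~(\ref{eq:KPintrep}) equals \(M!\,(-)^{M(M-1)/2}\tau^{(1)}(\bm{t})\). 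Finally I would identify \(Z(\bm{t})\) as the same integral with the observable set to one (the partition function); the identical chain of manipulations, now with \(F^{(2)}\) in place of \(F^{(1)}\), gives \(Z(\bm{t}) = M!\,(-)^{M(M-1)/2}\tau^{(2)}(\bm{t})\). The factors \(M!\) and \((-)^{M(M-1)/2}\) cancel in the quotient, leaving \(\tau^{(1)}(\bm{t})/\tau^{(2)}(\bm{t}) = \mathcal{K}^{-1}_{\bm{u}}(\bm{t})\), as claimed.

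I expect the main obstacle to be essentially bookkeeping: keeping the sign \((-)^{M(M-1)/2}\) and the factor \(M!\) consistent between the numerator integral and \(Z(\bm{t})\) so that they cancel cleanly, and verifying the column absorptions such as \(\prod_j z_j^{-M}\,\Delta(\bm{z}) = \det_{i,j}(z_j^{-i})\). The only genuinely conceptual point is to justify the phrase \emph{matrix expectation value}: the measure \(\Delta^2(\bm{z})\prod_j \dd z_j\) is the Weyl (eigenvalue) reduction of a normal matrix measure, and the observable \(\mathcal{K}^{-1}_{\bm{u}}(\bm{z})\) is symmetric in the \(z_j\), so the reduction to eigenvalues is legitimate and the expression is a bona fide normalized expectation value.
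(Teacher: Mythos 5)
Your proposal is correct and is essentially the paper's own argument run in the opposite direction: the paper starts from \(\tau^{(1)}_{\bm{u}}(\bm{t})\) in the Gram-determinant form~(\ref{eq:KPintrep}) and uses the Andr\'eief identity to unfold it into the eigenvalue integral, then substitutes \(\tau^{(1)}(\bm{z})=\mathcal{K}^{-1}_{\bm{u}}(\bm{z})\tau^{(2)}(\bm{z})\) and absorbs \(\ln\tau^{(2)}(\bm{z})\) into the potential, whereas you evaluate the right-hand side and collapse it via the same two ingredients (the cancellation built into \(V\), the Andr\'eief identity together with Lemma~(\ref{eq:KPintrep})) back to \(\tau^{(1)}(\bm{t})/\tau^{(2)}(\bm{t})=\mathcal{K}^{-1}_{\bm{u}}(\bm{t})\). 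Your bookkeeping of the factors \(M!\) and \((-)^{M(M-1)/2}\), and of the column absorptions \(\prod_j z_j^{-M}\,\Delta(\bm{z})=\det_{i,j}(z_j^{-i})\), matches the paper's conventions, so the two proofs are the same computation read in opposite directions.
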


\begin{proof}
Using an integral representation~(\ref{eq:KPintrep}) and the Andreev identity
\be 
\det\limits_{ij} \left( \int \dd \mu(z) f_i(z) g_j(z) \right)=\frac{1}{M!}\int \prod_{l=1}^N \dd \mu(z_l) \det\limits_{jk}(f_{j}(z_k)) \det\limits_{jk}(g_{j}(z_k))\; ,
\ee 
we have 
\bse
\be 
\tau_{\bm{u}}^{(1)}(\bm{t}) = \frac{(-)^{M(M-1)/2}}{M!}
\oint \prod_{j=1}^M \frac{\dd z_j}{2\pi i} e^{\sum_{j=1}^M \left( \xi(\bm{t}, z_j^{-1}) -M \ln z_j \right) } \Delta(\bm{z})^2 \tau_{\bm{u}}^{(1)}(\bm{z})\; ,
\ee
where we use the measure \(\dd \mu(z) = e^{\xi(\bm{t}, z_j^{-1})}z^{-M}\dd z\). Expression~(\ref{eq:cond}) gives
\be 
\begin{split}
\tau^{(1)}_{\bm{u}}(\bm{t}) & = \frac{(-)^{M(M-1)/2}}{M!} \oint \prod_{j=1}^M \frac{\dd z_j}{2\pi i} e^{\sum_{j=1}^M \left( \xi(\bm{t}, z_j^{-1}) -M \ln z_j \right) } \Delta(\bm{z})^2 \mathcal{K}^{-1}_{\bm{u}}(\bm{z}) \tau_{\bm{u}}^{(2)}(\bm{z})\\
& = \frac{(-)^{M(M-1)/2}}{M!} \oint \prod_{j=1}^M \frac{\dd z_j}{2\pi i} e^{\sum_{j=1}^M \left( \xi(\bm{t}, z_j^{-1}) -M \ln v_j \right) + \ln \tau^{(2)}(\bm{z}) } \Delta(\bm{z})^2 \mathcal{K}^{-1}_{\bm{u}}(\bm{z})\\
& = \frac{(-)^{M(M-1)/2}}{M!} \oint \prod_{j=1}^M \frac{\dd z_j}{2\pi i} e^{ - M V(\bm{t}, \bm{z}) } \Delta(\bm{z})^2 \mathcal{K}^{-1}_{\bm{u}}(\bm{z})
\end{split}
\ee
where we have defined the effective potential \(V(\bm{t}, \bm{z})\) as
\be 
- M V(\bm{t}, \bm{z})=\sum_{j=1}^M \left( \xi(\bm{t}, z_j^{-1}) -M \ln z_j \right) + \ln \tau^{(2)}(\bm{z}) \; .
\ee

Defining the partition function \(Z(\bm{t})\) via
\be 
\frac{(-)^{M(M-1)/2}}{M!} Z(\bm{t})\equiv \tau^{(2)}(\bm{t}) = \oint \prod_{j=1}^M \frac{\dd z_j}{2\pi i} e^{ - M V(\bm{t}, \bm{z}) } \Delta(\bm{z})^2 \; ,
\ee
we see that the kernel \(\mathcal{K}^{-1}_{\bm{u}}(\bm{t})\), in Miwa coordinates, is given by
\be 
\bbracket{\Psi_{bethe}(\bm{u})}{\Psi(\bm{t})}\equiv \langle  \mathcal{K}^{-1}_{\bm{t}}(\bm{v})\rangle  = \frac{1}{Z(\bm{t})} \oint \prod_{j=1}^M \frac{\dd z_j}{2\pi i} e^{ - M V(\bm{t}, \bm{z}) } \Delta(\bm{z})^2 \mathcal{K}^{-1}_{\bm{u}}(\bm{z})\; .
\ee
\ese
\end{proof}

\subsection{Consistency conditions}

It is well known in the theory of classical integrable systems~\cite{Zabrodin2018} that the \ac{hbe}~(\ref{eq:hbe}) are equivalent to
\be 
\sum_{j=0}^\infty h_j(-2\bm{x}) h_{j+1}(\tilde{\bm{D}}) e^{\sum_{k=1}^\infty x_k D_k }  \llbracket \tau^{(a)}\left( \bm{t}\right), \tau^{(a)}\left( \bm{t}\right) \rrbracket \; ,
\ee
where \(h(\bm{x})\) are symmetric polynomials defined by 
\be 
\exp\left(\sum_{k\geq 0} x_k z^k\right) = \sum_{j\geq 0}h_{j}(\bm{x}) z^j\; .
\ee 
Moreover, 
\( \bm{D}\equiv \{D_k \ |\ k\in \mathbb{N}^\ast\}\) are the Hirota derivatives defined by the conditions
\be 
P(\bm{D}) \llbracket f(\bm{t}),g(\bm{t})\rrbracket := P(\partial_{\bm{x}}) \left( f(\bm{t}-\bm{x}) g(\bm{t}+ \bm{x}) \right)\big|_{\bm{x}=\bm{0}} = 0\; ,
\ee 
where \(P(\bm{x})\) is a generic polynomial, and the operators \( \tilde{\bm{D}}\) are Hirota derivatives defined with respect to \(\tilde{\partial}_{\bm{x}} = \{ k^{-1}\partial_{x_k}\ |\ k\in \mathbb{N}^\ast\} \). From these expressions, it is obvious that the kernel of the Slavnov product \(\mathcal{K}^{-1}_{\bm{u}}(\bm{t})\) satisfies the following hierarchy of differential equations
\be 
\label{eq:compcond}
\begin{split}
&\sum_{j=0}^\infty h_j(-2\bm{x}) h_{j+1}(\tilde{\bm{D}}) e^{\sum_{k=1}^\infty x_k D_k }  \llbracket \mathcal{K}^{-1}_{\bm{u}}(\bm{t}) \tau^{(2)}( \bm{t}), 
\mathcal{K}^{-1}_{\bm{u}}(\bm{t}) \tau_{\bm{u}}^{(2)}( \bm{t})\rrbracket = 0\\
&\sum_{j=0}^\infty h_j(-2\bm{x}) h_{j+1}(\tilde{\bm{D}}) e^{\sum_{k=1}^\infty x_k D_k }  \llbracket \tau_{\bm{u}}^{(1)}( \bm{t}), \mathcal{K}^{-1}_{\bm{u}}(\bm{t}) \tau^{(2)}( \bm{t})\rrbracket = 0\\
&\sum_{j=0}^\infty h_j(-2\bm{x}) h_{j+1}(\tilde{\bm{D}}) e^{\sum_{k=1}^\infty x_k D_k }  \llbracket \tau_{\bm{u}}^{(a)}( \bm{t}), \tau_{\bm{u}}^{(a)}( \bm{t}) \rrbracket = 0\; , \quad a=1,2
\end{split}	\; .
\ee 
The first two equations are, naturally, equivalent to the third one for \(a=2\), therefore the hierarchy of equations above is satisfied by the Slavnov product. 

\begin{remark}
It might be interesting to think of~(\ref{eq:compcond}) in the opposite direction: Assume the existence of a tau function \(\tau_{\bm{u}}^{(2)} (\bm{t})\), we would like to find a function \(\mathcal{K}_{\bm{t}}(\bm{t})\) such that \(\mathcal{K}^{-1}_{\bm{u}}(\bm{t})\tau_{\bm{u}}^{(2)} (\bm{t})\equiv \tau_{\bm{u}}^{(1)} (\bm{t})\) is another tau function. Solving  equation~(\ref{eq:compcond}), one might build such a function. 
\end{remark}

The first coefficients in \(x_1\) are
\be 
\begin{split}
D_1 & \llbracket f(\bm{t}), g(\bm{t})\rrbracket = 0\\ 
D_2& \llbracket f(\bm{t}), g(\bm{t})\rrbracket = 0\\
\left( D_1^3 - 4D_3 \right)& \llbracket f(\bm{t}), g(\bm{t})\rrbracket = 0\\
\left( D_1^4+ 3 D_2^2 - 4 D_1 D_3 + 3 D_1^2 D_2 - 6 D_4 \right)& \llbracket f(\bm{t}), g(\bm{t})\rrbracket = 0
\end{split}
\ee
As usual, when \(f(\bm{t})=g(\bm{t})\), as in the case of the KP tau functions, the first three equations are trivial, and the last is exactly the KP equation. For \(f(\bm{t})=\tau_{\bm{u}}^{(2)}(\bm{t})\) and \(g(\bm{t})=\mathcal{K}^{-1}_{\bm{u}}(\bm{t})\tau_{\bm{u}}^{(2)}(\bm{t})\), the first three equations define further conditions on \(\mathcal{K}^{-1}_{\bm{u}}(\bm{t})\).

\subsection{Pl\"{u}cker relation}

For the sake of completeness, we finish this section with a simple corollary that follows from the results in~\cite{Zinn-Justin:2002fzw, Zinn-Justin:2002rai, Zinnjustin2009} originally obtained for the Toda lattice. 
\begin{corollary2} 
For tau functions of the form~(\ref{eq:taufunc}), the \ac{hbe} are equivalent to the Pl\"{u}cker relation
\be 
\sum_{i=1}^{M+1} (-)^i \det\limits_{i,j}F^{(a)}_j\left(\bm{X}\setminus \{x_i\}\right) \det\limits_{i,j}F^{(a)}_j\left(\bm{Y}\cup \{x_i\}\right) = 0\; ,
\ee
where 
\be 
\label{eq:vec}
\begin{split}
\bm{X} & =(x_1, \dots, x_{M+1})\\
\bm{Y} & =(y_1, \dots, y_{M-1})
\end{split}\; .
\ee
\end{corollary2}

\begin{proof}
Let us consider the vectors~(\ref{eq:vec}), and the Miwa variables \(\bm{t}=\{t_p\}\) and \(\bm{s}=\{s_q\}\) with components
\be 
t_p = \frac{1}{p}\sum_{j=1}^{M+1} x_j^p\qquad 
s_q = \frac{1}{q}\sum_{j=1}^{M-1} y_j^q\; .
\ee

Let us start a proof with some identities.

\noindent {\bf \S.1} For a generic point \(i\), we have
\be 
\begin{split}
	\bm{t}-[x_i] & =\left\{ t_p - \frac{x_i^p}{p} \right\} = 
	\left\{ \frac{1}{p} \sum_{j=1}^{M+1} x_j^p - \frac{x_i^p}{p} \right\} = 
	\left\{ \frac{1}{p} \sum_{\substack{j=1\\ j\neq i}}^{M+1} x_j^p \right\}\\
	\bm{s}+[x_i] & =\left\{ s_p +\frac{a_i^p}{p} \right\} = 
	\left\{ \frac{1}{p} \sum_{j=1}^{M-1} y_j^p + \frac{x_i^p}{p} \right\} = 
	\left\{ \frac{1}{p} \sum_{j=1}^{M} y_j^p \right\}\;, \quad  \textrm{with}
	\quad y_{M}\equiv x_i
\end{split}\; .
\ee
Therefore
\be 
\begin{split}
	\tau^{(a)}(\bm{t}-[x_i]) & \equiv \tau^{(a)}(\bm{X}\setminus \{x_i\})\\
	\tau^{(a)}(\bm{s}+[x_i]) & \equiv \tau^{(a)}(\bm{Y}\cup\{x_i\})
\end{split}\; ,
\ee
where \(\card(\bm{X}\setminus \{x_i\})=\card(\bm{Y}\cup \{x_i\}) = M\).

\noindent {\bf \S.2} We want to write the Vandermonde determinant \(\Delta(\bm{X}\setminus \{x_i\})\) in terms of \(\Delta(\bm{X})\). For \(i\in [1, \dots, M+1]\), we have
\be 
\begin{split}
\Delta(\bm{X}) & = \prod_{\substack{j,k=1\\ j<k}}^{M+1} (x_j - x_k) =  \prod_{\substack{j=1}}^{i-1} (x_j - x_i) 
\prod_{\substack{k=i+1}}^{M+1} (x_i - x_k)  \prod_{\substack{j<k\\ j,k\neq i}}^{M+1} (x_j - x_k)\\
&= (-1)^{M+1-i} \Delta(\bm{X}\setminus \{x_i\}) \prod_{\substack{j=1\\ j\neq i}}^{M+1} (x_j - x_i) \; .
\end{split}	
\ee

\noindent {\bf \S.3} Similarly, we want to write the Vandermonde determinant \(\Delta(\bm{y}\cup \{x_i\})\) in terms of \(\Delta(\bm{y})\). For \(i\in [1, \dots, M+1]\), we write \(y_M\equiv x_i\); therefore
\be 
\begin{split}
\Delta(\bm{Y}\cup \{x_i\}) & = \prod_{\substack{j,k=1\\ j<k}}^{M} (y_j - y_k)= \prod_{j=1}^{M-1} (y_j - x_i)\prod_{\substack{j,k=1\\ j<k}}^{M-1} (y_j - y_k)\\
& = \Delta(\bm{Y}) \prod_{j=1}^{M-1} (y_j - x_i)\; .
\end{split}	
\ee

\noindent {\bf \S.4} With these definitions, it is easy to show that
\be 
e^{\xi(\bm{t}-\bm{s} ,z)} = \prod_{j=1}^{M+1}\frac{1}{1-z x_j}  \prod_{k=1}^{M-1}(1-z y_k)\; ,
\ee
and inserting it back into the \ac{hbe}~(\ref{eq:hbe}), we have
\be 
\begin{split}
0  & = \oint \frac{\dd z}{2\pi i}  \prod_{j=1}^{M+1}\frac{1}{1-z x_j}  \prod_{k=1}^{M-1}(1-z y_k) \tau^{(a)}\left( \bm{t} -[z^{-1}] \right) \tau^{(a)}\left( \bm{s} + [z^{-1}]\right)\\
& = \oint \frac{\dd z}{2\pi i} \frac{(-x_i^{-1})}{ z- x_i^{-1}} \prod_{\substack{j=1\\ j\neq i} }^{M+1}\frac{1}{1-z x_j}  \prod_{k=1}^{M-1}(1-z y_k) \tau^{(a)}\left( \bm{t} -[z^{-1}] \right) \tau^{(a)}\left( \bm{s} + [z^{-1}]\right)\\ 
& = \sum_{i=1}^{M+1} (-)  \prod_{\substack{j=1\\ j\neq i} }^{M+1}\frac{1}{x_i - x_j}  \prod_{k=1}^{M-1}(x_i - y_k) \tau^{(a)}\left( \bm{t} -[x_i ] \right) \tau^{(a)}\left( \bm{s} + [x_i ]\right)\; .
\end{split}
\ee	

Using the expressions derived in~{\bf \S.1},~{\bf \S.2} and~{\bf \S.3}, we find
\be 
0 = \sum_{i=1}^{M+1} (-)^{M+i-1}\frac{\Delta\left(\bm{X}\setminus \{x_i\}\right)}{\Delta\left(\bm{X}\right)} 
\frac{\Delta\left(\bm{Y}\cup \{x_i\}\right)}{\Delta\left(\bm{Y}\right)}  \tau\left(\bm{X}\setminus \{x_i\}\right)
\tau\left(\bm{Y}\cup \{x_i\}\right)
\ee
and from the form~(\ref{eq:taufunc}) for the tau functions, we conclude that the \ac{hbe} become the Pl\"{u}cker relation
\be 
\sum_{i=1}^{M+1} (-)^i \det\limits_{i,j}F^{(a)}_j\left(\bm{X}\setminus \{x_i\}\right) \det\limits_{i,j} F^{(a)}_j\left(\bm{Y}\cup \{x_i\}\right) = 0\; .
\ee
Observe that although we have used the indices \(a\) to keep the notation consistent in this work, this identity is a general statement. Moreover, observe that in~\cite{Zinn-Justin:2002fzw, Zinn-Justin:2002rai, Zinnjustin2009} the formula for the Toda lattice (that can be obtained using the same strategy discussed above) does not have the factors \((-)^i\), and it spoils the relations (unless, of course, these signs are implicit).
\end{proof}	


\section{Schur functions expansion \& Baker-Akhiezer function}
\label{sec:schur}

This section discusses some properties of the Schur polynomials expansion of the tau-functions and Slavnov product  described above. Let us summarize a few facts about the Schur polynomials expansion of the tau function~(\ref{eq:fftau}). Here we use the notation~\cite{Alexandrov:2012tr}, see also~\cite{Miwa2000},

Tau functions of the KP hierarchy, with Miwa coordinates \(\bm{t}=\{t_1, t_2, t_3, \dots\}\), can be represented in terms of free fermions as
\be 
\label{eq:fftau}
\tau_{\bm{u}}^{(a)}(\bm{t})=\langle 0 | e^{\bm{J}_+(\bm{t})} G^{(a)}| 0\rangle \; ,
\ee 
where \(\bm{J}_+(\bm{t})=\sum_{k>0} t_k J_k\), where the components \(\{J_k\ |\ k \in \mathbb{Z}\}\) satisfy the Heisenberg algebra \(\mathfrak{\hat{u}}(1)\) 
\[ 
[J_m, J_n]=m \delta_{m+n,0}\; .
\] 
Moreover, we also need the group element \(G\in GL(\infty)\). Therefore
\be 
\begin{split}
	\tau_{\bm{u}}^{(a)}(\bm{t}) & =\langle 0 | e^{\bm{J}_+(\bm{t})} G^{(a)}| 0\rangle = \sum_{\lambda} (-)^{\flat(\lambda)} \langle 0 | e^{\bm{J}_+(\bm{t})} |\lambda\rangle \langle \lambda| G^{(a)}| 0\rangle\\
	& = \sum_{\lambda}  (-)^{\flat(\lambda)} \langle \lambda| G^{(a)}| 0\rangle s_{\lambda}(\bm{t})\; ,
\end{split}
\ee
where the integer partitions in Frobenius coordinates are denoted as \(\lambda=(\alpha_1,\dots, \alpha_d |\beta_1, \dots, \beta_{d})\), and we have used that  \(\flat(\lambda)=\sum_{j=1}^d (\beta_j + 1)\).

Now, it is easy to pass from Miwa coordinates \(\bm{t}\) to the \emph{configuration} space \(\bm{v}\)
\be 
\tau_{\bm{u}}^{(a)}(\bm{v}) = \sum_{\lambda} (-)^{\flat(\lambda)} \langle \lambda| G^{(a)} |0\rangle s_{\lambda}(\bm{v})\; ,
\ee
and using the orthonormality of Schur polynomials~\cite{Orlov:2002jx}, we have 
\be 
\langle \lambda| G^{(a)} |0\rangle = (-)^{\flat(\lambda)} (s_{\lambda}(\bm{v}), \tau_{\bm{u}}^{(a)}(\bm{v}) )\; .
\ee
\medskip

\begin{remark}
Observe that although the Miwa coordinates \(\bm{t}\) have infinitely many components \(\{t_k\ |\ k\in \mathbb{N}^\ast\}\), the vectors \(\bm{v}=\{v_1, v_2, \dots, v_M\}\) are limited by the dimension of the Magnon sector \(M\). Therefore, the coordinates \(\{t_m\ |\  m>M\}\), do not carry any new information. It also means that it would be enough to consider \(G\in GL(M, \infty)\)~\cite{Miwa2000, Alexandrov:2012tr}.
\end{remark}

\subsection{Slavnov product in terms of Schur functions}

We want to write the series expansion for the functions \(F_j^{(a)}(z)\). From the eigenvalue expression~(\ref{eq:eigenvalues}), it is easy to show that
\be 
\Lambda(z, \bm{u}) = - \frac{1+q^{2(N-2M+1)}}{ q^{2(N-M)+1} z^{2N}} + \sum_{n=-N+1}^\infty \tilde{f}_{2n} z^{2n}\; ,
\ee
where the coefficients \(\tilde{f}_{2j}\) depend on the Bethe roots \(\bm{u}\). Therefore,
\be 
F^{(1)}_j(z) = \partial_{u_j}\Lambda(z, \bm{u})= \sum_{n=-N+1}^\infty \tilde{f}^{(1)}_{j, 2n} z^{2n}
\ee
where \(\tilde{f}^{(1)}_{j, 2n}\equiv \partial_{u_j} \tilde{f}^{(1)}_{2n}\). Let us write these functions as 
\be 
\label{eq:seriesF1}
F_j^{(1)}(z) = \frac{1}{z^{2N-2}} \sum_{n=0}^{\infty} f^{(1)}_{j, 2n}\ z^{2n}\; , \quad f^{(1)}_{j, 2n}\equiv \tilde{f}^{(1)}_{j, 2(n-N+1)}\; .
\ee

Additionally, the function \(F_j^{(2)}(z) \) has an expansion of the form 
\be 
\label{eq:seriesF2}
F_j^{(2)}(z) = z^2\sum_{n=0}^{\infty} f^{(2)}_{j, 2n}\ z^{2n}\; .
\ee

At this point we can simply consider the transformation \(z\to \pm \sqrt{y}\) and write the expansions above in terms of \(y\). The ambiguity in the definition of \(z\) is harmless since the function \(\Lambda(z)\) is even; this means that the physical results are the same regardless of the square root branch we use. Therefore
\bse 
\be 
F_j^{(1)}(y) = \frac{1}{y^{N-1}} \sum_{n=0}^{\infty} \hat{f}^{(1)}_{j, n}\ y^{n} \; , \quad \hat{f}^{(1)}_{j, 2n}\equiv f^{(1)}_{j, 2n} \; ,
\ee
and 
\be 
F_j^{(2)}(y) = y \sum_{n=0}^{\infty} \hat{f}^{(2)}_{j, n}\ y^{n} \; , \quad \hat{f}^{(2)}_{j, n}\equiv f^{(2)}_{j, 2n}\; .
\ee
\ese

As a simple consequence of the calculations above, the definition \(\bm{v}=\{\sqrt{w_i}\ |\ i=1, \dots, M\}\) yields
\bse
\be 
\det\limits_{ij} F_j^{(a)}(w_i) = \mathcal{P}^{(a)}(\bm{w}) \det\limits_{ij} \sum_{n=0}^{\infty} \hat{f}^{(a)}_{j, 2n}\ w_i^{2n}\; ,
\ee
where 
\be 
\mathcal{P}^{(a)}(\bm{w}) = \left\{ 
\begin{array}{lll}
	\prod_{j=1}^M w_j^{1-N} & \text{if} &	a=1\\ & \vspace{-7pt} & \\
	\prod_{j=1}^M w_j & \text{if} &	a=2
\end{array}
\right.\; .
\ee
\ese

From these expressions, one can find the Schur polynomials expansion using the standard algorithm~\cite{Foda2009a, Foda2009}, see also~\cite{Takasaki:2010qm}. The Cauchy-Binet theorem gives
\be 
\det\limits_{ij} \sum_{n=0}^{\infty} \hat{f}^{(a)}_{j, n}\ w_i^{n} = \sum_{0\leq \ell_M <\dots \ell_1\leq \infty} \det\limits_{i j} (f^{(a)}_{i , \ell_j}) \det\limits_{i j} (v_i^{ \ell_j})\; .
\ee
Now, we use the transformation \(\ell_j = \lambda_j - j +M\), then
\be 
\det\limits_{ij} \sum_{n=0}^{\infty} \hat{f}^{(a)}_{j, 2n}\ w_i^{2n} = \sum_{0\leq \lambda_M \leq \dots \leq \lambda_1\leq \infty} \det\limits_{i j} (\hat{f}^{(a)}_{i , \lambda_j - j +M}) \det\limits_{i j} (w_i^{\lambda_j - j +M})\; ,
\ee
where the sum is over the Young diagrams \(\bm{\lambda}=(\lambda_1, \dots, \lambda_M)\). Here, \(c^{(a)}_\lambda(\bm{u}) \equiv \det(f^{(a)}_{i, \lambda_j - j +M})\) are Pl\"{u}cker coordinates in Sato Grassmannian. Finally, one concludes that
\be 
\tau^{(a)}_{\bm{u}}(\bm{w}) =\frac{\det\limits_{ij} F_j^{(a)}(w_i)}{\Delta(\bm{w})} = \mathcal{P}^{(a)}(\bm{w}) 
\sum_{ \bm{\lambda}} c^{(a)}_{\bm{\lambda}}(\bm{u}) s_{\bm{\lambda}}(\bm{w})\; .
\ee

All in all, the normalized Slavnov product becomes
\bse
\be 
\label{eq:nslav}
\bbracket{\Psi_{bethe}(\bm{u})}{\Psi(\bm{w})} = \prod_{j=1}^{M}\frac{1}{w_j^N} \frac{\sum_{ \bm{\lambda}} c^{(1)}_{\bm{\lambda}}(\bm{u}) s_{\bm{\lambda}}(\bm{w})}{\sum_{ \bm{\lambda}} c^{(2)}_{\bm{\lambda}}(\bm{u}) s_{\bm{\lambda}}(\bm{w})}\; .
\ee
Writing this expression in terms of a new set of Miwa coordinates \(t'_m = \frac{1}{m}\sum_{j=1}^M w_j^m\), we find
\be 
\label{eq:nslav2}
\bbracket{\Psi_{bethe}(\bm{u})}{\Psi(\bm{t}')} = \prod_{j=1}^{M}\frac{1}{w_j^N(\bm{t}')} \frac{\sum_{ \bm{\lambda}} c^{(1)}_{\bm{\lambda}}(\bm{u}) s_{\bm{\lambda}}(\bm{t}')}{\sum_{ \bm{\lambda}} c^{(2)}_{\bm{\lambda}}(\bm{u}) s_{\bm{\lambda}}(\bm{t}')}\; .
\ee
\ese
Since there is no ambiguity in our discussion, in the remainder of this work we drop the primes in \(t'_m\). Using some elementary properties of the Schur polynomials~\cite{Macdonald1998}, it is easy to see that
\bse 
\be 
\begin{split}
\sum_{ \bm{\lambda}} c^{(2)}_{\bm{\lambda}}(\bm{u}) s_{\bm{\lambda}}(\bm{t}) & = 
c^{(2)}_{\bm{\emptyset}}(\bm{u}) + \sum_{ \bm{\lambda}>\bm{\emptyset}} c^{(2)}_{\bm{\lambda}}(\bm{u}) s_{\bm{\lambda}}(\bm{t})\\
& = c^{(2)}_{\bm{\emptyset}}(\bm{u}) \left(1 + \frac{1}{c^{(2)}_{\bm{\emptyset}}(\bm{u})}\sum_{ \bm{\lambda}>\bm{\emptyset}} c^{(2)}_{\bm{\lambda}}(\bm{u}) s_{\bm{\lambda}}(\bm{t})\right)
\end{split}
\ee
where we have used that \(s_{\bm{\emptyset}}(\bm{t})=1\). Then, we have the formal series
\be 
\sum_{ \bm{\lambda}} c^{(2)}_{\bm{\lambda}}(\bm{u}) s_{\bm{\lambda}}(\bm{t})
 \equiv c^{(2)}_{\bm{\emptyset}}(\bm{u}) \left( 1 - \sum_{ \bm{k}} \bar{c}_{\bm{k}}(\bm{u}) \vec{T}^{\bm{k}}\right)\; ,
\ee
where \(\vec{T}^{\bm{k}}\) are monomials \(t_1^{k_1} t_2^{k_2} t_3^{k_3}\cdots \) and the sum is over all possible (infinite) vectors \(\bm{k}=(k_1, k_2, \dots, k_j, 0,0, 0, \dots )\), \(k_i\in \mathbb{N}\) \(\forall \ i \in \mathbb{N}\) and \(j<\infty\). Additionally, the formal series
\be 
\label{eq:inv}
\frac{1}{\sum_{ \bm{\lambda}} c^{(2)}_{\bm{\lambda}}(\bm{u}) s_{\bm{\lambda}}(\bm{t})}
= \sum_{ \bm{k}} \hat{c}_{\bm{k}}(\bm{u}) \vec{T}^{\bm{k}}\; .
\ee
\ese
is a symmetric function in \(\bm{t}\), since the \(\{s_{\bm{\lambda}}(\bm{t})\}\) form a basis for the space of symmetric functions. It also means that function~(\ref{eq:inv}) can also be expanded in terms of Schur functions. Finally, inserting~(\ref{eq:inv}) into~(\ref{eq:nslav2}) and using the Littlewood–Richardson rules \(s_{\bm{\mu}}(\bm{t}) s_{\bm{\nu}}(\bm{t}) = \sum_{\bm{\lambda}} N^{\bm{\lambda}}_{\bm{\mu} \bm{\nu} }s_{\bm{\lambda}}(\bm{t})\), one can see that 
\be 
\label{eq:nslav3}
\boxed{ \bbracket{\Psi_{bethe}(\bm{u})}{\Psi(\bm{t})} = \prod_{j=1}^{M}\frac{1}{w_j^N(\bm{t})} \sum_{ \bm{\lambda}} A_{\bm{\lambda}}(\bm{u}) s_{\bm{\lambda}}(\bm{w}) }
\ee
where the coefficients \(\{A_{\bm{\lambda}}(\bm{u})\} \) are written in terms of \(\{c^{(1)}_{\bm{\lambda}}(\bm{u})\}\), \(\{c_{\bm{\lambda}}^{(2)}(\bm{u})\}\) and of the Littlewood–Richardson coefficients \(\{ N^{\bm{\lambda}}_{\bm{\mu} \bm{\nu}} \}\). Observe that the expansion in~(\ref{eq:nslav3}) does not define a tau function, since the coefficients \(\{A_{\bm{\lambda}}(\bm{u})\}\) are not Pl\"ucker coordinates. 

\subsection{Baker-Akhiezer functions}

We already know that the series expansions \(\sum_{ \bm{\lambda}} c^{(a)}_{\bm{\lambda}}(\bm{u})s_{\bm{\lambda}}(\bm{w})\) are, by themselves, tau functions
\be 
\tilde{\tau}_{\bm{u}}^{(a)}(\bm{v}) = \sum_{ \bm{\lambda}} c^{(a)}_{\bm{\lambda}}(\bm{u}) s_{\bm{\lambda}}(\bm{w})\; .
\ee
Additionally, 
\be 
\prod_{j=1}^{M}\frac{1}{w_j^N} = \exp \left( - N \sum_{p=1}^\infty\sum_{j=1}^M\frac{1}{p}(1-w_j)^p \right)\; .
\ee
These results imply that the normalized Slavnov product~(\ref{eq:nslav}) can be represented as
\be 
\bbracket{\Psi_{bethe}(\bm{u})}{\Psi(\bm{w})} = 
\exp \left( - N \sum_{p=1}^\infty\sum_{j=1}^M\frac{1}{p}(1-w_j)^p \right) \frac{\tilde{\tau}_{\bm{u}}^{(1)}(\bm{w})}{\tilde{\tau}_{\bm{u}}^{(2)}(\bm{w})}\; ,
\ee
that resembles the Baker-Akhiezer functions for the KP hierarchy. Inspired by this similarity, one may fell compelled to define the matrix \(\Psi(\bm{t})\) in Miwa coordinates as
\be 
\Psi(\bm{t},z) = e^{\xi(\bm{t},z)}
\begin{pmatrix}
	\psi_{11}(\bm{t},z) & \psi_{12}(\bm{t},z) \\ \psi_{21}(\bm{t},z) & \psi_{22}(\bm{t},z)
\end{pmatrix}\;  , \quad 
\psi_{ab}(\bm{t},z) = \frac{\tilde{\tau}_{\bm{u}}^{(a)}(\bm{t} - [z^{-1}])}{\tilde{\tau}_{\bm{u}}^{(b)}(\bm{t})} \; , \quad a,b=1,2\; .
\ee
The diagonal elements are Baker-Akhiezer tau functions, while anti-diagonal terms give the Slavnov product 
\be 
\bbracket{\Psi_{bethe}(\bm{u})}{\Psi(\bm{t})} = \prod_{j=1}^{M}\frac{1}{w_j^N(\bm{t})} \psi_{12}(\bm{t},\infty)\; ,
\ee
where \(\psi_{12}(\bm{t},\infty)=\psi_{21}(\bm{t},\infty)^{-1} = \tilde{\tau}_{\bm{u}}^{(1)}(\bm{t})/ \tilde{\tau}_{\bm{u}}^{(2)}(\bm{t}) \). Obviously, this latter condition does not necessarily mean that \(\psi_{12}(\bm{t},z)=\psi_{21}(\bm{t},z)^{-1}\). 

We can study the compatibility conditions for the linear problem
\be 
D_m \Psi(\bm{t},z) = \mathcal{A}_m \Psi(\bm{t},z) 
\ee
where \(D_m\) are differential operators depending on the Miwa coordinates \(\bm{t}\). The simplest choice for the operator \(D_m\) is \(D_m = \mathbb{1}\partial_m\), but this condition  means that \(\Psi(\bm{t},z)\) is itself a \emph{Baker-Akhiezer} function. Obviously, it imposes further undesirable constrains on \(\psi_{21}(\bm{t},z)\). Another possibility is the ansatz 
\be 
D_m =
\begin{pmatrix}
\partial_m & \mathcal{M}_m \\
\mathcal{M}_m & \partial_m
\end{pmatrix} 
\equiv \mathbb{1}\partial_m +\sigma^1 \mathcal{M}_m\; .
\ee
From \(D_m D_n \Psi(\bm{t},z)  = D_n D_m \Psi(\bm{t},z)\) and the zero curvature equation for the KP hierarchy, it is easy to write the conditions for \(\mathcal{M}_m\) and \(\mathcal{A}_m\). The consequences of these definitions, on the other hand, are not clear yet. This structure is currently under investigation, and we expect to report some results in a separate work. 


\section{Final remarks and conclusions}
\label{sec:remarks}

We have studied the relations between open Temperley-Lieb spin chains and tau functions of the KP hierarchy. This analysis is a logical continuation of the works of Foda, Wheeler and Zuparic~\cite{Foda2009a, Foda2009} for the spin-\(\sfrac{1}{2}\)-XXZ spin chain with periodic boundary conditions. In this brief section we would like to  summarize the main results we have obtained and compare them with~\cite{Foda2009a, Foda2009}. In the sequence, we discuss some open problems.  

First of all, let us briefly review the work~\cite{Foda2009a, Foda2009}, using -- in part -- the language discussed above. Kitanine, Maillet and Terras~\cite{Kitanine1999} have shown that given the Bethe roots \(\bm{\lambda}=\{\lambda_i \ | \ i=1, \dots, M\}\), and a set of free variables \(\bm{\mu}=\{\mu_i \ | \ i=1, \dots, M\}\), the Slavnov products~\cite{Slavnov1989} of the XXZ spin chain with periodic boundary conditions can be written as
\be 
\bracket{\Phi(\bm{\mu})}{\Phi_{bethe}(\bm{\lambda})} = \frac{\det \bm{T}}{\det \bm{V}}
\ee
where \(\bm{T}\) and \(\bm{V}\) are matrices with components
\be 
T_{ij}\equiv \mathcal{T}_i (\mu_j,\bm{\lambda}) = \frac{\partial}{\partial \lambda_i}\Lambda(\mu_j, \bm{\lambda})
\quad\text{and} \quad 
V_{ij}=\frac{1}{\sinh(\mu_j - \lambda_i)}
\ee
respectively. Additionally, \(\Lambda(\mu , \bm{\lambda})\) denotes the eigenvalues of the transfer matrix  -- whose explicit form can be found in~\cite{Kitanine1999}.

The change of coordinates \(v_i = e^{\mu_i/2}\) and \(u_i = e^{\lambda_i/2}\) yields
\be 
T_{ij} \equiv \mathcal{T}_i (v_j,\bm{u}) = \frac{1}{2}\frac{\partial}{\partial \log u_i}\Lambda(v_j, \bm{u})
\quad\text{and} \quad 
V_{ij}=\frac{2 u_i^{1/2} v_j^{1/2}}{v_j - u_i}\; .
\ee
Finally, using the Cauchy's determinant 
\be 
\det\limits_{i,j}\left( \frac{1}{u_i - v_j}\right) = \frac{\Delta(\bm{u}) \Delta(\bm{v})}{\prod\limits_{i,j}(u_i - v_j)}\; ,
\ee
we immediately see that the Slavnov product is proportional to a KP tau function
\be 
\bracket{\Phi(\bm{v})}{\Phi_{bethe}(\bm{u})} = G(\bm{v}, \bm{u}) \frac{\det\limits_{i,j} \mathcal{T}_i (v_j,\bm{u}) }{\Delta(\bm{v})}\; ,
\ee
where we have collected the unimportant multiplicative terms in the function \(G(\bm{v}, \bm{u})\). Finally, using the ideas reviewed in~\cite{Takasaki:2010qm} -- and in our previous section --, one can recover the Schur polynomials expansion of~\cite{Foda2009a, Foda2009}. 

Moving to the results discussed in this paper, it well known that the Hamiltonian~(\ref{eq:tlhamil}) for the spin-\(\sfrac{1}{2}\) is equivalent to the XXZ spin chain with \(U_q(\mathfrak{sl}(2))\) invariant boundary conditions~\cite{Pasquier:1989kd}, see~\cite{Doikou:2009xq} and references therein for a review on this topic. Theorem~(\ref{eq:sptheorem}) states that the Slavnov product for \ac{tl} open spin chain~(\ref{eq:tlhamil}) is proportional to a quotient of KP hierarchy tau functions, and as we have repeatedly stressed, conjecture~\cite{Nepomechie:2016ruv, Nepomechie:2016ejv} implies that this result is also true for spins \(s>\sfrac{1}{2}\). All these results can be summarized in table~\ref{table:maps}. 

\begin{table}[h!]
	\centering
	\begin{tabular}{|l|l|l|}
		\hline 
		\texttt{Spin chain} & \texttt{Slavnov product} / \(\tau\)-\texttt{functions} & \texttt{Ref.} \\ 
		\hline\hline
		XXZ - periodic \(s=\sfrac{1}{2}\) & \(\propto \tau(\bm{t})\)  & \textbf{FWZ}:~\cite{Foda2009a, Foda2009}\\ \hline
		XXZ - \(U_q(\mathfrak{sl}(2))\) \(s=\sfrac{1}{2}\) &  \(\propto\tau^{(1)}(\bm{t})/\tau^{(2)}(\bm{t})\)  & \textbf{Here}: Theorem~(\ref{eq:sptheorem})\\ \hline
		Temperley-Lieb \(s>\sfrac{1}{2}\) &  \(\propto\tau^{(1)}(\bm{t})/\tau^{(2)}(\bm{t})\)  & \textbf{Here}: Conjecture\\ \hline
	\end{tabular}	
	\caption{Relation between the type of spin chain and the KP tau functions}
	\label{table:maps}
\end{table}

There are many questions left unanswered, starting with the list given in the conclusions of~\cite{Foda2009a, Foda2009}. Those problems have not been properly addressed, and their study might lead to new breakthroughs in the field of integrable systems. One particularly interesting question is to understand how we can use this classical/quantum relation more pragmatically, since up to this point, we have just the description of a duality. There are also different connections between quantum spin chains and KP tau functions~\cite{Alexandrov:2011aa}, and it is not clear if there is any relation between the latter results and the structures we studied here.

There are also some immediate problems that follow from our analysis. First of all, the conjecture of~\cite{Nepomechie:2016ruv, Nepomechie:2016ejv} needs to be completely proved (or disproved) to settle down part of the questions raised in this work. Additionally, it is interesting to address this classical/quantum relation for other spin chains, representations and other boundary conditions. 

Two problems currently under investigation are the following. In section~\ref{sec:slavnov} we briefly described how the Slavnov product appears as a expectation value of a matrix model. Furthermore, we have also seen in section~\ref{sec:schur} that we can collect the same Slavnov product in terms of a matrix whose diagonal components are Baker-Akhiezer functions. We have not discussed any consequence of these two features, and hopefully we will be able to report some results in the future. Let us hope that nature does not disappoint us.


\subsection*{Acknowledgments.} This work was supported by the Swiss National Science Foundation under grant number \textsc{PP00P2\_183718/1}. I would like to thank Susanne Reffert and Domenico Orlando for many discussions and collaborations on related projects. 

\appendix

\section{Strict Young diagrams}

In section~\ref{sec:schur} we have considered a transformation \(z\to \sqrt{w}\) to simplify the sums~(\ref{eq:seriesF1}) and~(\ref{eq:seriesF2}). In this appendix, we tried to answer if the expansions in even powers could lead to new insights. Although the answer is negative, we show that this expansion gives, at least, some interesting combinatorial results. This section may be regarded as a mathematical curiosity.  

We discuss some simple consequences of the Schur polynomials expansions of the form
\bse
\be 
\det\limits_{ij} F_j^{(a)}(v_i) = \mathcal{P}^{(a)}(\bm{v}) \det\limits_{ij} \sum_{n=0}^{\infty} f^{(a)}_{j, 2n}\ v_i^{2n}\; ,
\ee
where 
\be 
\mathcal{P}^{(a)}(\bm{v}) = \left\{ 
\begin{array}{lll}
	\prod_{j=1}^M v_j^{2-2N} & \text{if} &	a=1\\ & \vspace{-7pt} & \\
	\prod_{j=1}^M v_j^{2} & \text{if} &	a=2
\end{array}
\right.\; .
\ee
\ese
The Cauchy-Binet theorem gives
\be 
\det\limits_{ij} \sum_{n=0}^{\infty} f^{(a)}_{j, 2n}\ v_i^{2n} = \sum_{\substack{0\leq \ell_M <\dots \ell_1\leq \infty\\ \ell_j \in 2\mathbb{N}\; , \forall j\in[1,M]}} \det\limits_{i j} (f^{(a)}_{i , \ell_j}) \det\limits_{i j} (v_i^{ \ell_j})\; .
\ee
Now, we use the transformation \(\ell_j = \lambda_j - j +M \in 2\mathbb{N}\), then
\be 
\det\limits_{ij} \sum_{n=0}^{\infty} f^{(a)}_{j, 2n}\ v_i^{2n} = \sum_{\substack{0\leq \lambda_M <\dots \lambda_1\leq \infty\\ \bm{\lambda}\in \mathcal{C}}} \det\limits_{i j} (f^{(a)}_{i , \lambda_j - j +M}) \det\limits_{i j} (v_i^{\lambda_j - j +M})\; ,
\ee
where \(\mathcal{C}\) is the set of Young diagrams \(\bm{\lambda}=(\lambda_1, \dots, \lambda_M)\) satisfying the constraints \( \lambda_j - j +M \in 2\mathbb{N}\). Here,  \(c^{(a)}_\lambda(\bm{u}) \equiv \det(f^{(a)}_{i, \lambda_j - j +M})\) are Pl\"{u}cker coordinates in Sato Grassmannian. Therefore, we conclude that
\be 
\tau^{(a)}_{\bm{u}}(\bm{v})=\frac{\det\limits_{ij} F_j^{(a)}(v_i)}{\Delta(\bm{v})} = \mathcal{P}^{(a)}(\bm{v}) 
\sum_{ \bm{\lambda}\in \mathcal{C}} c^{(a)}_\lambda(\bm{u}) s_{\bm{\lambda}}(\bm{v})\; .
\ee

Observe that the constraints in \(\mathcal{C}\) give interesting classes of diagrams. In particular, let us denote \(\mathbb{Z}_2=\{ [0],[1]\}\), then 
\be 
\label{eq:cond1}
\begin{split}
	M\in [0] & \Rightarrow \left\{ \begin{array}{lll} \lambda_{j} \in [0] & \textrm{if} & j\in[0]\\ \lambda_{j} \in [1] & \textrm{if} & j\in[1] \end{array} \right. \\
	M\in [1] & \Rightarrow \left\{ \begin{array}{lll} \lambda_{j} \in [1] & \textrm{if} & j\in[0]\\ \lambda_{j} \in [0] & \textrm{if} & j\in[1] \end{array} \right.
\end{split}\; .
\ee 
These conditions also mean that the tau functions \(\tau_{\bm{u}}^{(1)}(\bm{v})\) and \(\tau_{\bm{u}}^{(2)}(\bm{v})\) are summed over \emph{strict} Young diagrams
\be
\label{eq:cond2}
\bm{\lambda} = (\lambda_1, \lambda_2, \dots, \lambda_M)\quad \text{with} \quad \lambda_1 > \lambda_2 > \dots> \lambda_M\geq 0 \; .
\ee
\begin{remark}
	From these conditions, we conclude that the when written in terms of the coordinates \(\bm{v}\), the Slavnov product for the spin-\(s\) Temperley-Lieb algebra is the quotient of two tau functions with an expansion over Schur polynomials satisfying the conditions~(\ref{eq:cond1}) and~(\ref{eq:cond2}).
\end{remark}

\noindent We can study some particular cases:

\paragraph{\(\bullet\ \bm{M=1}\):} The vacuum Young diagram \(\bm{\emptyset}\) is admissible if and only if \(M=1\). In fact, for this case, the tau functions \(\tau_{\bm{u}}^{(a)}(\bm{t})\) have 1-row Young diagrams with an even number boxes, that is
\begin{equation}
	\ytableausetup{centertableaux, smalltableaux} \quad 
	\tau_{\bm{u}}^{(a)}(\bm{t})\propto \ c^{(a)}_{\bm{\emptyset}} \bm{\emptyset} + c^{(a)}_{(2)} \ydiagram[*(lightgray)]{2} + c^{(a)}_{(4)}\ydiagram[*(lightgray)]{4} + \cdots 
\end{equation}
where we have represented the Schur function by its corresponding Young diagram. 

Assuming that there exists a number \(\lambda_1^{max}\in 2\mathbb{N}\) such that \(c^{(a)}_{(\lambda_1)}\approx 0\)\ \(\forall \ \lambda_1 > \lambda_1^{max}\). Then, there will be \((\lambda_1^{max}+2)/2\) nonzero components \(c^{(a)}_{(\lambda_1)}\) (Young diagrams) contributing to the tau function \(\tau_{\bm{u}}^{(a)}(\bm{t})\). Evidently, there is, \emph{a priori}, no reason to believe that such a maximal value exists; and this comment is just a simple exercise. 
\medskip

\paragraph{\(\bullet\ \bm{M=2}\):} Now we have diagrams satisfying the conditions 
\be 
\bm{\lambda} = (\lambda_1, \lambda_2)\qquad \text{with} \quad \lambda_1>\lambda_2\; , \quad \lambda_1 \in 2\mathbb{N} + 1\; , \quad \lambda_2 \in 2\mathbb{N}\; .
\ee
The minimal diagram in this configuration is \(\lambda^{min}=(1,0)\). All in all, the tau functions have the structure
\begin{equation}
	\begin{split}
		\ytableausetup{centertableaux, smalltableaux} \quad 
		\tau_{\bm{u}}^{(a)}(\bm{t})\propto
		&\ c^{(a)}_{(1,0)} \ydiagram[*(lightgray)]{1} +  c^{(a)}_{(3,0)}  \ydiagram[*(lightgray)]{3}+  c^{(a)}_{(5,0)}  \ydiagram[*(lightgray)]{5} + \cdots \\[2ex]
		+&\ c^{(a)}_{(3,2)} \ydiagram[*(lightgray)]{2,3} + c^{(a)}_{(5,2)} \ydiagram[*(lightgray)]{2,5} + c^{(a)}_{(7,0)} \ydiagram[*(lightgray)]{2,7} +  \cdots
	\end{split}	
\end{equation}
Observe that this case \(M=2\) is defined Young diagrams with an odd number of boxes.

Assuming, once again, a maximal number \(\lambda_1^{max}\in 2\mathbb{N}+1\) of boxes in the first row, it is easy to see that there are \((\lambda^{max}_1 +1)/2\) diagrams in the first row. Furthermore, since the maximal number of boxes in the second row is \(\lambda_2^{max}=\lambda_1^{max}-1\), there are \((\lambda_2^{max}+2)/2=(\lambda_1^{max}+1)/2\) diagrams for fixed \(\lambda_1^{max}\). Repeating this idea to all cases \(1\leq \lambda_1<\lambda_1^{max}\), it is easy to conclude that the total number of diagrams for fixed \(\lambda_1^{max}\) is
\be 
\label{eq:ndiagrams}
\begin{split}
	\# \textrm{Diagrams} & = \sum^{\lambda_1^{max}}_{\substack{k=1\\ k\in 2\mathbb{N}+1}}\frac{k+1}{2} = \sum^{(\lambda_1^{max}+1)/2}_{j=1} j  \\
	& = \frac{1}{8}(\lambda^{max}_1+1)(\lambda^{max}_1+3) 
\end{split}\; ,
\ee
where we have written the sum as an arithmetic progression. 

Observe that we can organize the \(M=2\) case in the matrix structure
\be 
\begin{pmatrix}
	s_{(1,0)} & s_{(3,0)} & s_{(5,0)} & s_{(7,0)} & \cdots \\
	s_{(3,2)} & s_{(5,2)} & s_{(7,2)} & s_{(9,2)} & \cdots \\
	s_{(5,4)} & s_{(7,4)} & s_{(9,4)} & s_{(11,4)} & \cdots \\
	s_{(7,6)} & s_{(9,6)} & s_{(11,6)} & s_{(13,6)} & \cdots\\
	\vdots & & \cdots & & \ddots
\end{pmatrix}
\ee
where \(s_{(\lambda_1,\lambda_2)}\) denotes the Schur polynomial for the partition \(\bm{\lambda}=(\lambda_1, \lambda_2)\). Therefore, expression~(\ref{eq:ndiagrams}) gives the number of elements in and above the \(\tfrac{(\lambda_1^{max}+1)}{2}\textsuperscript{th}\) anti-diagonal, see Fig.~\ref{fig:m2case}.
\begin{figure}[htb!]
	\centering
	\begin{subfigure}[b]{0.2\textwidth}
		\includegraphics[width=\textwidth,angle=180]{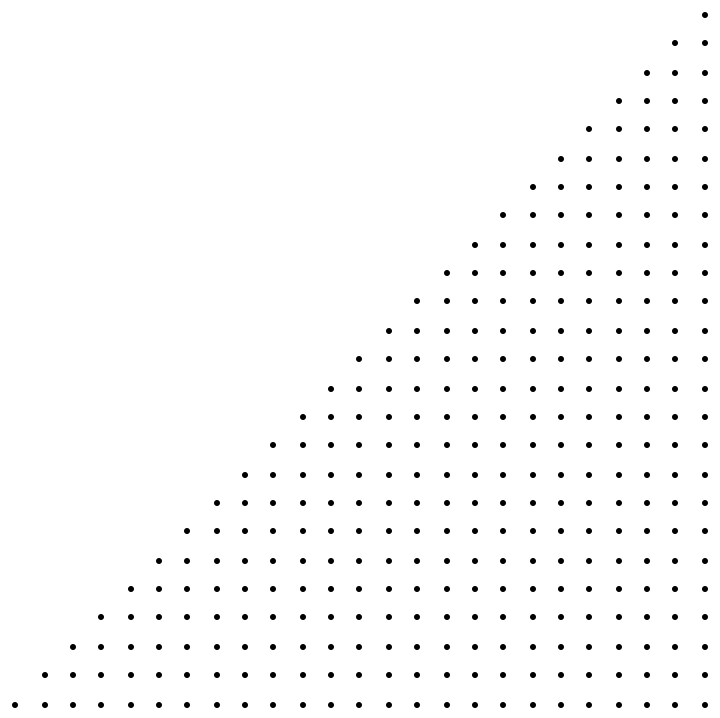}
		\caption{\(\lambda_1^{max}=49\).}
	\end{subfigure}	\hspace{2truecm}
	\begin{subfigure}[b]{0.2\textwidth}
		\includegraphics[width=\textwidth,angle=180]{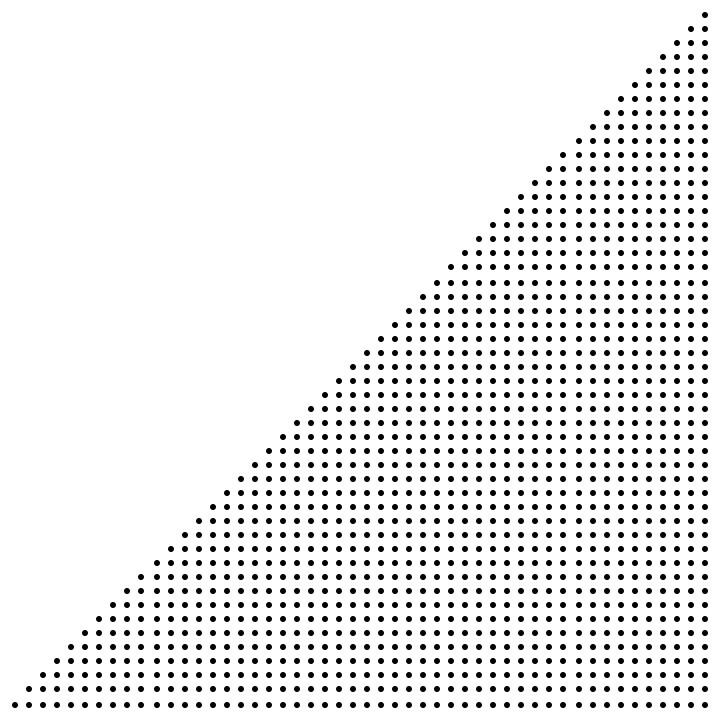}
		\caption{\(\lambda_1^{max}=99\).}
	\end{subfigure}	
	\caption{Array structure for the spin chain in the \(M=2\) magnon sector.}
	\label{fig:m2case}
\end{figure}
\medskip 

\paragraph{\(\bullet\ \bm{M=3}\):} We have the diagrams
\be 
\bm{\lambda} = (\lambda_1, \lambda_2, \lambda_3)\qquad \text{with} \quad \lambda_1>\lambda_2>\lambda_3\; , \quad \lambda_1 \in 2\mathbb{N} \; , \quad \lambda_2 \in 2\mathbb{N}+1\; , \quad \lambda_3 \in 2\mathbb{N}\; .
\ee
The minimal diagram in this configuration is \(\lambda^{min}=(2,1,0)\). The tau functions have the form
\begin{equation}
	\begin{split}
		\ytableausetup{centertableaux, smalltableaux} 
		\tau_{\bm{u}}^{(a)}(\bm{t})\propto
		&\ c^{(a)}_{(2,1,0)} \ydiagram[*(lightgray)]{1,2} + c^{(a)}_{(4,1,0)} \ydiagram[*(lightgray)]{1,4} + c^{(a)}_{(6,1,0)} \ydiagram[*(lightgray)]{1,6} + \cdots \\
		& \ c^{(a)}_{(4,3,0)} \ydiagram[*(lightgray)]{3,4} + c^{(a)}_{(6,1,0)} \ydiagram[*(lightgray)]{3,6} + c^{(a)}_{(8,3,0)} \ydiagram[*(lightgray)]{3,8} + \cdots \\
		& \ c^{(a)}_{(4,3,2)} \ydiagram[*(lightgray)]{2,3,4}+ c^{(a)}_{(6,3,2)} \ydiagram[*(lightgray)]{2,3,6} +  c^{(a)}_{(8,3,2)} \ydiagram[*(lightgray)]{2,3,8} + \cdots 
	\end{split}
\end{equation}

Using our condition of a maximal \(\lambda_1^{max}\in 2\mathbb{N}\) once again, it is easy to see that there are \(\lambda_1^{max}/2\) diagrams of the form \((\lambda_1,1,0)\), with \(2\leq \lambda_1\leq \lambda_1^{max}\). Using the same argument, the maximal number of boxes in the second row is \(\lambda_2^{max}=\lambda_1^{max}-1\), and it implies the existence of \((\lambda_2^{max}+1)/2=\lambda_1^{max}/2\) diagrams of the form \((\lambda_1^{max}, \lambda_2,0)\) for fixed \(\lambda_1^{max}\) and \(\lambda_2\geq 1\). Finally, if we fix the length of the second row \(\lambda_2^{max}\), there are 
\((\lambda_3^{max}+2)/2=(\lambda_2^{max}+1)/2\) diagrams of the form \((\lambda_1, \lambda_2^{max} ,\lambda_3)\) for \(0\leq \lambda_3\leq \lambda_2-1\). For a maximal \(\lambda_1^{max}\), the total number of Young diagrams is
\be 
\begin{split}
	\# \textrm{Diagrams} & = \sum^{\lambda_1^{max}}_{\substack{k=2\\ k\in 2\mathbb{N}}} \sum^{k-1}_{j=1} \frac{j+1}{2} = \sum^{\lambda_1^{max}}_{\substack{k=2\\ k\in 2\mathbb{N}}} \sum^{k/2}_{j=1} j  \\
	& = \frac{1}{8} \sum^{\lambda_1^{max}}_{\substack{k=2\\ k\in 2\mathbb{N}}} k(k+2) = \frac{1}{48}\lambda_1^{max}((\lambda_1^{max})^2 + 6\lambda_1^{max} +8)
\end{split}\; .
\ee 
We also have an array structure
\be 
\begin{pmatrix}
	s_{(2,1,\lambda_3)} & s_{(2,1,\lambda_3)} & s_{(4,1,\lambda_3)} & s_{(6,1,\lambda_3)} & \cdots \\
	s_{(4,3,\lambda_3)} & s_{(6,3,\lambda_3)} & s_{(8,3,\lambda_3)} & s_{(10,3,\lambda_3)} & \cdots \\
	s_{(6,5,\lambda_3)} & s_{(8,5,\lambda_3)} & s_{(10,5,\lambda_3)} & s_{(12,5,\lambda_3)} & \cdots \\
	s_{(8,7,\lambda_3)} & s_{(10,7,\lambda_3)} & s_{(12,7,\lambda_3)} & s_{(14,7,\lambda_3)} & \cdots\\
	\vdots & & \cdots & & \ddots
\end{pmatrix}\qquad\text{for}\ \  \lambda_3=0, 2, 4, \dots 
\ee
Finally, we can also represent this structure diagrammatically as Fig.~\ref{fig:m3case}.
\begin{figure}[htb!]
	\centering
	\begin{subfigure}[b]{0.2\textwidth}
		\includegraphics[width=\textwidth]{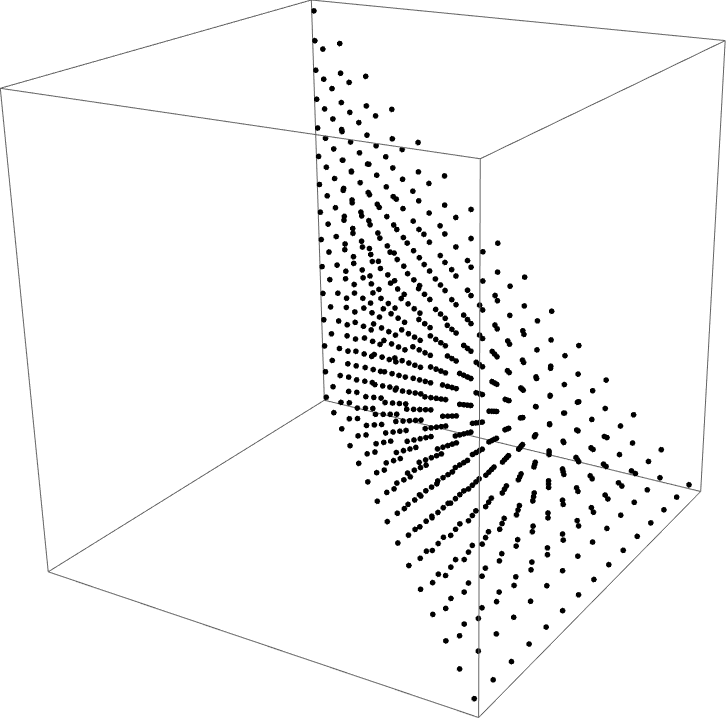}
		\caption{\(\lambda_1^{max}=30\).}
	\end{subfigure}	\hspace{2truecm}
	\begin{subfigure}[b]{0.2\textwidth}
		\includegraphics[width=\textwidth]{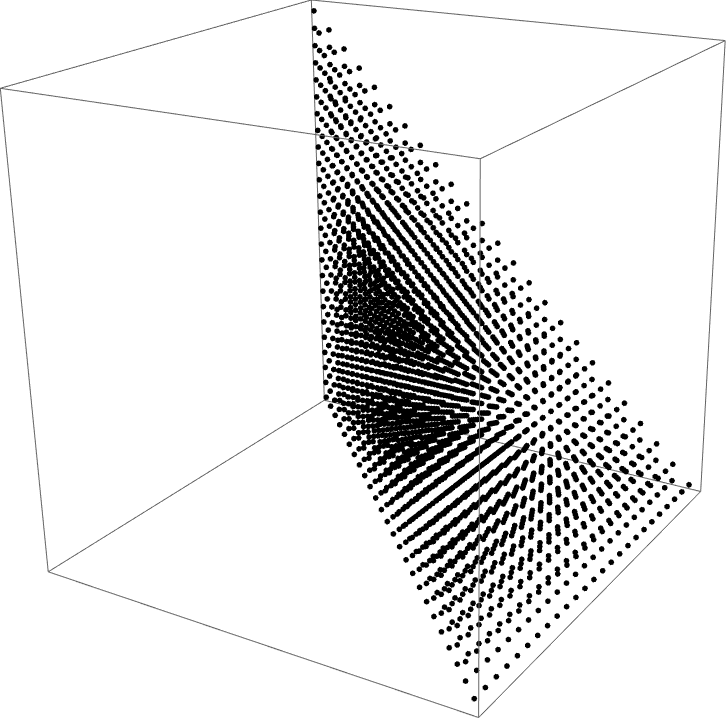}
		\caption{\(\lambda_1^{max}=50\).}
	\end{subfigure}	
	\caption{Array structure for the spin chain in the \(M=3\) magnon sector.}
	\label{fig:m3case}
\end{figure}
\medskip 

\paragraph{\(\bullet\ \bm{M}\):} For general \(M\), the minimal Young diagram is given by
\be 
\bm{\lambda}^{min} = (M-1, M-2, M-3, \dots, 2, 1, 0)\; .
\ee
Using the arguments above, we can delineate the argument for the number of boxes for configurations limited by \(\lambda_1^{max}\) in the \(M\)-magnon sector as
\be 
\# \textrm{Diagrams} = \sum_{k_1=M-1}^{\lambda_1^{max}}  \sum_{k_2=M-2}^{k_1} \cdots \sum_{k_{M-1}=1}^{k_{M-2}} \frac{k_{M-1}+1}{2}\; ,
\ee
where we have use initial condition \(k_0=1\).

\bibliographystyle{utphys}
\bibliography{library.bib}

\end{document}